\newcommand{\Z}{\mathbb{Z}}
\begin{document}
%
% paper title
\title{DNA Image Pro - A Tool for Generating Pixel Patterns using DNA Tile Assembly}
\author{Dixita Limbachiya, Dhaval Trivedi and Manish K Gupta
	\institute{Dhirubhai Ambani Institute of Information and Communication Technology\ Gandhinagar, \ Gujarat-382007, \ India}
	\email{dlimbachiya@acm.org,201001107@daiict.ac.in, mankg@computer.org}
	}
\maketitle

\begin{abstract}
Self-assembly is a process found everywhere in the Nature. In particular, it is known that DNA self-assembly is Turing universal.
Thus one can do arbitrary computations or build nano-structures using DNA self-assembly. In order to understand the DNA 
self-assembly process, many mathematical models have been proposed in the literature. In particular, abstract Tile Assembly Model (aTAM)
received much attention. In this work, we investigate pixel pattern generation using aTAM. For a given image, a tile assembly system is given which can generate the image by self-assembly process. We also consider image blocks with specific cyclic pixel patterns (uniform shift and non uniform shift) self assembly. A software, DNA Image Pro, for generating pixel patterns using DNA tile assembly is also given.

%Using DNA tile assembly, an attempt is made to generate the image by considering each pixel of an image as a single tile. Tile sets for the image can be generated with or without compression. To facilitate the image generation from DNA tile assembly, DNA Image Pro is developed. DNA Image Pro facilitates different color assignment to the tiles which helps in identifying the tile assembly more preciously. DNA Image Pro provides optimization algorithms for image processing with tile assembly. 
\end{abstract}

\begin{keywords}
DNA self-assembly, Algorithmic self-assembly, DNA computing, DNA Image Pro, Xgrow, Pixel patterns
\end{keywords}

%%%%%%%%%%%%%%%%%%%%%%%%%%%%%%%%%%%%%%%%%%%%%%%%%%%%%%%%%%%%%%%%%%%%
\section{INTRODUCTION}
 
 Self-assembly of DNA has shown remarkable potential for diverse areas of applications since last two decades. It has been used to build 3D cube \cite{Seeman1982237}, 2D lattice \cite{park2005three}, \cite{zhang2006periodic}, \cite{labean2000construction}, polyhedra
 \cite{he2008hierarchical}, octahedron \cite{he2010chirality}, DNA nanotubes \cite{liu2004DNA}, 2D arrays \cite{he2005self}, complex nano structures \cite{lin2006DNA}. In early days people used trial and error approach to construct the nan-scale structure until E. Winfree shows that self assembly of DNA is Turing universal by proposing a model for algorithmic self-assembly \cite{winfree1998algorithmic}, \cite{fujibayashi2007toward}. Two types of tile assembly models have been studied widely viz. abstract Tile Assembly Model (aTAM) and the kinetic Tile Assembly Model (kTAM) \cite{doty2012theory}. Winfree also developed a simulator Xgrow \cite{Xgrow} to visualize the DNA tile assembly growth with specific tile set as an input file.
 
 In this work, we investigate pixel pattern generation using aTAM. Each pixel of an image can be mapped to unique tile and by obtaining a rule and set of tiles in aTAM one can re-construct the image using DNA self-assembly. Thus, in other words, we obtain an aTAM representation of the image and a .tile file (an input for the Xgrow simulator of Winfree) for the corresponding image. Such a representation could be useful for the image compression. If we want to communicate an image we can send the .tile file which is often less in size than the original image size, however, the computational time cost is more for generating the image. It would be an interesting task to use redundancy in the image and discover minimal sets of tiles for the image generation. Thus in our preliminary study we focus our attention to some simple set of cyclic pixel blocks and study their tile sets.
 
 The paper is organized as follows. Section 2 introduced DNA tiles systems. Section 3 considers the different pixels patterns using DNA tile assembly. Section 4 gives an overview about DNA Image Pro software. An interesting open question is given in Section 5 and the final section discusses about the software availability.
%With the notion of using DNA tile assembly for pixel patterns, method of image block generation using DNA tile assembly is developed here. The pattern of the pixels if understood can be processed by algorithmic tile assembly. DNA Pixel Pro provides a platform to understand the image block generation with prospective of algorithmic tile assembly. This work introduces a new way of generating the tiles assembly using the shift operations.  

.

%\begin{figure}[h]
%\centering
%\includegraphics[width=1.0in]{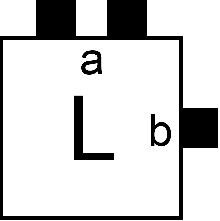}
%\caption{Example of an tile type}
%\label{fig_sim}
%\end{figure}

%%%%%%%%%%%%%%%%%%%%%%%%%%%%%%%%%%%%%%%%%%%%%%%%%%%%%%%%%%%%%%%%%%%%%%%%%%%
\section{DESIGN OF DNA TILES}
Each tile has label L and  some glue on each side. Each glue has some strength. Two adjacent tiles can join if and only if their glue matches. While assembly, a tile can be transformed but cannot be rotated. A tile system can be represented as $S_R = (T; \textbf{S}; g; \tau )$ where T is the tile system, S is the seed configuration of this tile assembly, g is the glue strength of an edge with configuration \{N E S W\} (N-North, E-East, S-South, W-West) as shown in Fig \ref{DNAtile} and $\tau$  (always $\geq 0)$ is the threshold temperature. A tile file is created using different tile types depending on g and $\tau$.  Xgrow simulator takes a tile file as input and simulates it depending upon the tile types and their glues. In any tile file we need to specify the following basic parameters.

\begin{enumerate}
\item Total number of tile types
\item Total number of glues 
\item Configuration of each tile
\item Seed tile
\item $G_se$ and $G_mc$ parameters that measures free energy and entropy
\end{enumerate}

\begin{figure}[h]
\centering
\includegraphics[width=2.0in]{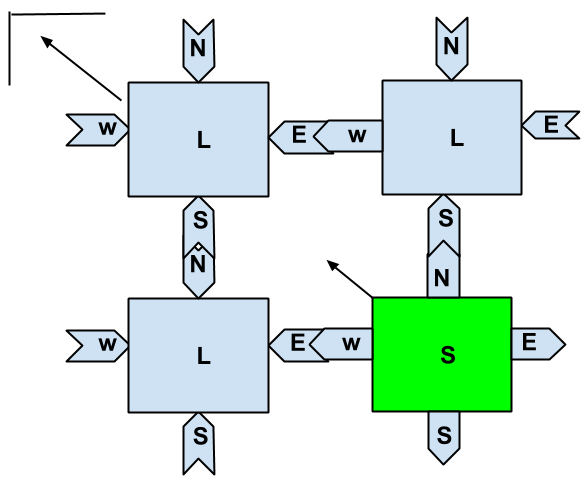}
\caption{Basic DNA tile set design}
\label{DNAtile}
\end{figure}

%\begin{example}
%Tile file \\
%tile edges matches \{\{N E S W\}*\}\newline
%num tile types=7\newline
%num binding types=3\newline
%tile edges=\{\newline
%\{3 0 0 3\}(blue)\newline
%\{3 0 3 1\}(red)\newline
%\{2 3 0 3\}(yellow)\newline
%\{1 1 2 2\}(black)\newline
%\{2 2 1 1\}(black)\newline
%\{1 2 2 1\}(white)\newline
%\{2 1 1 2\}(white)\newline
%\}
%binding strengths=\newline
%\{1 1 2\}\newline
%seed=500,500,1\newline
%Gse=10\newline
%Gmc=19\newline
%block=17\newline
%size=8\newline
%\end{example}

%We can compile this tile file in Xgrow using "\textit{xgrow filename.tiles}". Figure \ref{eg} is the output of given example tile file.

%\begin{figure}[h]
%\centering
%\includegraphics[width=1.6in]{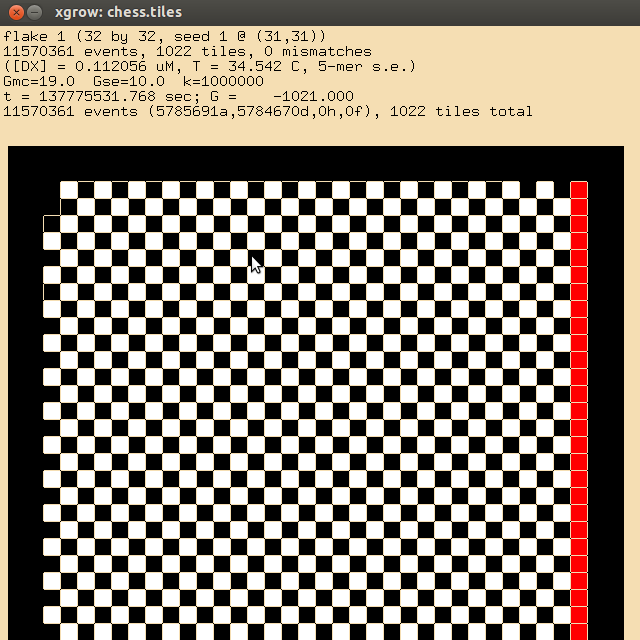}
%\caption{Output of example tile file}
%\label{eg}
%\end{figure}

%%%%%%%%%%%%%%%%%%%%%%%%%%%%%%%%%%%%%%%%%%%%%%%%%%%%%%%%%%%%%%%%%%%%%%%
\section{PIXEL PATTERNS USING DNA TILE ASSEMBLY}
Every image is made up of pixels. To generate image using tile assembly, one can consider each pixel as a tile. Assigning each tile with the pixel color, image block can be easily generated by combining each tile. If we consider image block with dimensions N$\times$M pixels then N$\times$M tile types are needed to generate the image block by tiles assembly. This way is trivial and inefficient for image with large dimensions as it needs large number of tiles. To simplify the problem, consider an image with some fixed pixel patterns. If there is some specific pattern, for which tile sets can be developed such that it can generate complete image, then this can be more feasible and efficient. With this idea, here two basic patterns in image block is considered. If image has cyclic blocks of uniform shift or non uniform shifts, then it can be generated by assembly of few tiles set with uniform and non uniform shift tile sets respectively.

\subsection{Uniform Shift Generator}
Consider image block with uniform shift, then by developing a tile set for self assembly of the uniform shift patterns complete image block can be generated. For uniform shift following basic tile types are required. 

\begin{enumerate}
\item Seed Tile
\item Base Row
\item Base Column
\item Computational Tiles
\end{enumerate}

Depending on uniform shift value, each row will be shifted uniformly. Total number of tiles required for uniform shift generation is given by theorem \ref{uniform}.

\begin{theorem}
Let $\sum_1 = \{i \ | \ 0 \leq i \leq 2N\}$ be the set of glues where $N \in \Z_n$  and let $T_1$ be a set of tiles (different tiles types of $T_1$ as given in Table \ref{uniformtiletable} ) over $\sum_1$ as described in Fig \ref{uniformtiles} %with seed configuration as defined in Fig \ref{uniformtiles} 
then $T_1$ computes the uniform shift $S \in \Z_n$. The total number of tiles required for uniform shift generation is 3N-1.
\label{uniform}
%such that for base row tile ${X = [2,N], Y=X+N+1}$, for base column tile$ {X=[2,N], Y=X+N-1, 
%T=(T-S) \mod N$} and for computational tiles  $X = [1, N], T=(X-1-S) \mod N O/p=(X-S) \mod N$ with g = 1, $\tau = 2$ }
\end{theorem}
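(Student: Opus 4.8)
The plan is to split the statement into its two assertions: first, the correctness claim that the assembly driven by $T_1$ reproduces the uniformly shifted image block (``$T_1$ computes the uniform shift $S$''), and second, the counting claim that the four tile categories listed above contain exactly $3N-1$ tiles in total. The second is essentially bookkeeping once the construction is pinned down, so I would concentrate the conceptual work on the first and then read off the count from the explicit tile list in Table \ref{uniformtiletable}.

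For correctness I would fix the coordinate frame so that the seed tile occupies a corner and anchors the origin of the assembly. The base row and base column families then propagate boundary information along the bottom and left edges: each such tile carries a glue drawn from $\sum_1$ that encodes its column index (resp. row index), and because growth is governed by strength-matching glues at threshold $\tau$, assembly is forced to proceed unambiguously outward from the seed. The heart of the argument is the behaviour of the computational tiles in the interior, as depicted in Fig \ref{uniformtiles}. Each computational tile should be read as a local rule that takes the value arriving on its incoming (say west, or south) glue, adds the shift $S$ modulo the block width, and exposes the result on its outgoing glue. I would then prove by induction on the row index $r$ that, given a correctly labelled row $r$, the tiles of $T_1$ admit exactly one way to tile row $r+1$, and that this row is row $r$ shifted by $S$; the base case is supplied by the base row, and the cyclic wrap-around is precisely what the glue range $0 \le i \le 2N$ is sized to absorb.

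With correctness established, the count follows from the partition into categories. I expect the tally to read: one seed tile; $N-1$ base-row tiles and $N-1$ base-column tiles (one fewer than $N$ in each case, since the corner position is already occupied by the seed); and $N$ computational tiles, one for each value class acted on by the shift in a row of width $N$. Summing gives $1 + (N-1) + (N-1) + N = 3N-1$, as claimed. As a consistency check I would also verify that the $2N+1$ glues of $\sum_1$ are enough to label the $N$ positional values on the two boundaries together with their shifted counterparts, confirming that no glue outside the declared set is ever required.

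The step I expect to be the genuine obstacle is the determinism part of the correctness induction: showing that at every frontier site exactly one tile type of $T_1$ has both its already-bound glues matching, so that the terminal assembly is unique and really is the intended shifted block rather than a mismatched or stalled partial configuration. Doing this cleanly means checking that the computational tiles' input glues are pairwise distinguishing across the value classes and that the boundary families never compete with the interior family for the same site. The cyclic shift's wrap-around is the case most likely to conceal an off-by-one error, so that is where I would focus the careful verification; the arithmetic of the final count is, by contrast, routine.
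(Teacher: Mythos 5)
Your counting argument is exactly the paper's: the paper's proof consists essentially only of the tally, grouping the seed with the base row to get $N + (N-1) + N = 3N-1$, while you follow the breakdown in Table \ref{uniformtiletable} as $1 + (N-1) + (N-1) + N = 3N-1$; these are the same computation. Where you genuinely diverge is that you treat the correctness claim (``$T_1$ computes the uniform shift $S$'') as the main burden and sketch a row-by-row induction with a determinism check at each frontier site, whereas the paper offers no argument for correctness at all --- it simply points to the tile configurations in Fig \ref{uniformtiles} and moves straight to the count. So your plan is strictly more ambitious than the published proof: the induction on the row index, the verification that exactly one computational tile matches at each interior site, and the check that the $2N+1$ glues suffice are all steps the paper leaves implicit in the figure. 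Be aware that your proposal is still a plan rather than an executed proof (the induction and the wrap-around case analysis are described but not carried out), but nothing in it conflicts with the construction, and the part the paper actually proves --- the count $3N-1$ --- you reproduce correctly.
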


\begin{proof}
Consider a tile system $T_1$ over $\sum_1$ with seed tile configuration as shown in Fig \ref{uniformtiles}. Total number of base row, base column and computational tiles gives overall tile sets required. For uniform shift generation, base row tiles are N, base column is N-1 and computational tiles are N. Hence total tile required are N+N-1+N. Therefore minimum number of tiles types needed are 3N-1.
\end{proof}

\begin{figure}[ht]
\centering
\includegraphics[width=5.0in]{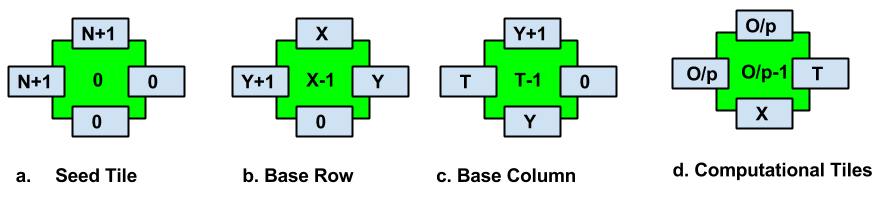}
\caption{Generalized tile system for uniform shift generation. a. Seed tile b. Base row tile where $2 \leq X \leq N$ and Y= X+N-1 c. Base Column tile where $2 \leq X \leq N$, Y= X+N-1 and $T= (T-S) \mod N$ for $T_0$ = 1 d. Computational tile where $1 \leq X \leq N$, $T= (X-1-S) \mod N$ , $O/p=(X-S) \mod N$ }
\label{uniformtiles}
\end{figure}

\begin{table}[ht]
\caption{Number of tiles for uniform shift generation}
\centering
\begin{tabular}{|l|l|}
\hline
Type of tiles&Number of tiles \\ \hline
Seed tiles&1 \\ \hline
Base row tile& N-1 \\ \hline
Base column tile&N-1  \\ \hline
Computational tile&N\\ \hline
Total tiles& 3N-1\\ \hline
\end{tabular}
\label{uniformtiletable}
\end{table}

\begin{figure}[h]
\centering
\includegraphics[width=4.0in]{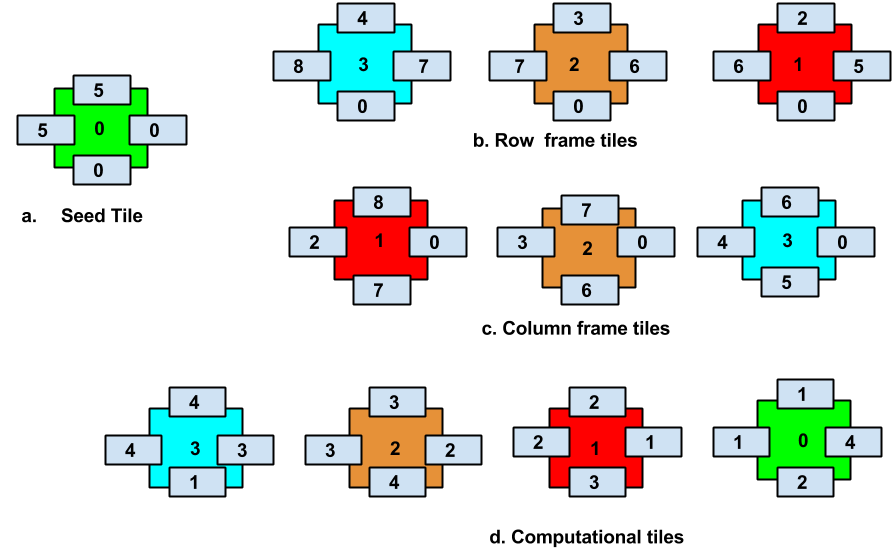}
\caption{Tile sets of uniform tile generation with $N=4$ ans $S=1.$ a. Seed tile b.Base row tile c. Base Column tile d. Computational tiles}
\label{uniformeg}
\end{figure}

\begin{figure}[h]
\centering
\includegraphics[width=3.0in]{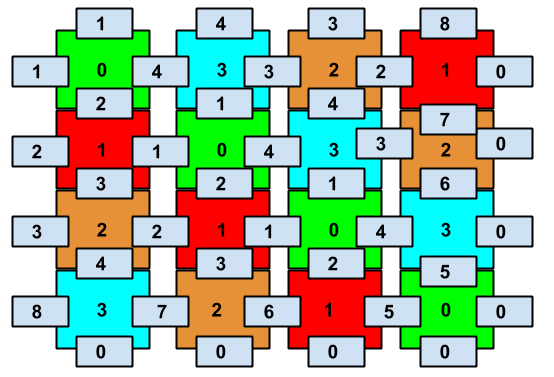}
\caption{Uniform Shift generation with N=4 and S=1 with the given tile system in Fig \ref{uniformeg}. }
\label{uniformgeneration}
\end{figure}

\begin{example}
Let us consider a tile set for $N=4$ and Shift value $S=1.$ Tile set file generated by using DNA Image Pro as shown in Fig \ref{uniformeg} is given below. Tile assembly generated is shown in Fig \ref{uniformgeneration}\\

\%seed tile \newline
\{5 0 0 5 \}(-33554177)\newline
\newline
\%First row. (bottom boundary row)\newline
\newline
\{2 5 0 6 \}(-16842752)\newline
\{3 6 0 7 \}(-33554177)\newline
\{4 7 0 8 \}(-16842752)\newline
\newline
\%First column. (Right most column)\newline
\newline
\{6 0 5 4 \}(-16842752)\newline
\{7 0 6 3 \}(-33554177)\newline
\{8 0 7 2 \}(-16842752)\newline
\newline
\%Rule tiles\newline
\newline
\{1 4 2 1 \}(-33554177)\newline
\{2 1 3 2 \}(-16842752)\newline
\{3 2 4 3 \}(-33554177)\newline
\{4 3 1 4 \}(-16842752)\newline\newline
\label{eg1}
\end{example}

In the example \ref{eg1}, the seed tile is $\{5,0,0,5\}$ with configuration $\{N+1,0,0,N+1\}$ for $N=4.$ For base row $\{X,Y,0,Y+1\}$ where $2 \leq X \leq 4$ and $Y = X+N-1.$ \newline
For $X=2,$ tile is $\{2,5,0,6\}$ \newline
\-\ \hspace{8mm} $X=3$ tile is $\{3,6,0,7\}$ \newline
\-\ \hspace{8mm} X=4 tile is \{4,7,0,8\}\newline
\\
For Base Column is with configuration \{Y+1,0,Y,T\}, Y= X+N-1 where $2 \leq X \leq 4$, $T = (T-S) \mod N$ and T = T+N if T $<$ 1.\newline
First we will have T=1\\
Now for X=2, Y=5, T=4 which will be \{6,0,5,4\}.\\
\-\ \hspace{6.5mm}for X=3, Y=6, T=3 which will be \{7,0,6,3\}\\
\-\ \hspace{6.5mm}for X=4, Y=7, T=2 which will be \{8,0,7,2\}\\\\
%This is also same as we had in the output file.\\\\
For Computational tiles we have configuration \{Op,T,X,Op\}, for $1 \leq X \leq 4$, $T= (X-1-S)\mod N$ and T = T+N if T$<$1, 
$Op = (X-S)\mod N$  and Op = Op+N if Op $<$1.\newline\\
Let T=1.\\
For X =1, T = 3, Op = 4 which will be \{4,3,1,4\}.\\
for X =2, T = 4, Op = 1 which will be \{1,4,2,1\}.\\
for X =3, T = 1, Op = 2 which will be \{2,1,3,2\}.\\
for X =4, T = 2, Op = 3 which will be \{3,2,4,3\}.\\

\subsection{Non-Uniform Shift Generator General tile set}
Consider block of non uniform shifts in image block, generation of tile assembly for non uniform shift image block can be generated.For each row different shifts values can be assigned. For Given base row of length N and Non-Uniform Shift array S, We can represent the tile set using following type of tiles:  \begin{enumerate}
\item Seed Tile
\item Base Row
\item Base Column
\item Computational Tiles
\end{enumerate}
Generalized tiles for each type are N = Base Row Length and S = Array of Shift Value.

\begin{theorem}
let $\sum_2 = \{i \ | \ 0 \leq i \leq 2N\}$ be the set of glues where $N \in \Z_n$ and let $T_2$ be a set of tiles (different tiles types of $T_2$ are given in Table \ref{nonuniformtiletable} ) over $\sum_2$ as described in Fig \ref{nonuniformtiles} then $T_2$ computes the non uniform shift $S \in \Z_n$. The total number of tiles required for uniform shift generation is 2(N-1)+ N$\times$S.

%such that for base row tile ${X = [2,N], Y=X+N+1}$, for base column tile$ {X=[2,N], Y=X+N-1, 
%T=(T-S) \mod N$} and for computational tiles  $X = [1, N], T=(X-1-S) \mod N O/p=(X-S) \mod N$,$1 \leq L \leq N-1$ }
\end{theorem}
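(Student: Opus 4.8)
The plan is to follow the same accounting strategy used for Theorem~\ref{uniform}, adapting it to the case where the shift is supplied as an array $S$ rather than a single constant. First I would fix the tile system $T_2$ over $\sum_2$ with the seed configuration of Fig.~\ref{nonuniformtiles} and split the tile set into the four declared families: seed, base row, base column, and computational tiles. The total then follows by counting each family and adding, exactly as in the uniform proof.

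Next I would carry out the count. As in the uniform construction, the two boundary layers, namely the base row and the base column, together contribute $2(N-1)$ tiles along the edges of the $N$-length block. The new ingredient is the computational layer: since the shift now varies from row to row according to $S$, each shift value needs its own block of $N$ computational tiles (one for each residue class modulo $N$), so the computational family contributes $N \times S$ tiles. Summing gives $2(N-1) + N \times S$, which is the stated total.

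The substantive part is the correctness claim that $T_2$ genuinely computes the prescribed non-uniform shift. Here I would argue by induction on the row index, tracking glue labels through the assembly in the manner of the worked computation following Example~\ref{eg1}. The inductive hypothesis would state that the north glues exposed after row $k$ encode the accumulated shift through row $k$ reduced modulo $N$; glue matching then forces precisely the computational block associated with $S_k$ to attach at height $k+1$, which both reproduces the intended shifted pattern and confirms that the block of $N$ tiles per shift value suffices. The $\bmod\,N$ reductions in the $T$ and $O/p$ fields are what isolate a unique attachable tile at each position.

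I expect the main obstacle to be this correctness induction rather than the counting, which is routine. The delicate point is to show that the glue alphabet $\sum_2 = \{i \mid 0 \le i \le 2N\}$ is large enough to separate the per-row computational blocks, so that distinct values in $S$ never create a spurious match across rows, and to handle the wrap-around correction ($T = T+N$ when $T<1$) cleanly at the column boundary. Verifying that no two blocks share an exposed glue in a way that permits an unintended attachment is the step most likely to require care.
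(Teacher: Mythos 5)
Your counting argument coincides with the paper's entire proof: the authors simply tally $N-1$ base row tiles, $N-1$ base column tiles, and $N\times S$ computational tiles and add them to reach $2(N-1)+N\times S$, so on that part you are on exactly the same track. Everything you describe afterwards --- the induction on the row index, the glue-separation check over $\sum_2$, the wrap-around correction at the column boundary --- is simply absent from the paper; the clause that $T_2$ ``computes the non uniform shift'' is never argued there, so your sketch is an addition rather than a reconstruction and would need to be carried out in full to actually establish that claim. Two further points worth flagging: the accompanying table lists a seed tile on top of the $2(N-1)+N\times S$ others, so the stated total (which both you and the paper reproduce) omits it and is off by one relative to the table; and ``$N\times S$'' is dimensionally odd since $S$ is an array of per-row shifts --- your reading of it as $N$ computational tiles per entry of $S$ is the only sensible one, but the theorem statement itself does not say so.
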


\begin{proof}
Consider a tile system $T_2$ over $\sum_2$ with seed tile configuration as shown in Fig \ref{nonuniformtiles}. For non uniform shift, each N-1 base row tiles and N-1 column tiles are shifted non uniformly with computational tiles N $\times$ S, therefore total tiles are $2(N-1) + N \times S$.
\end{proof}

\begin{figure}[ht]
\centering
\includegraphics[width=5.0in]{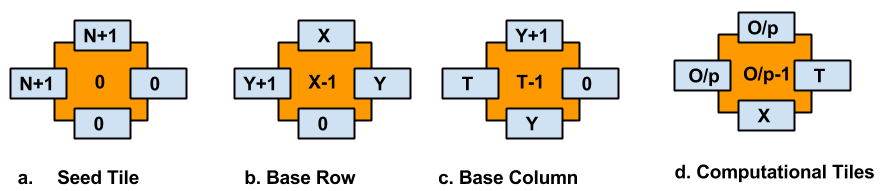}
\caption{Generalized tile system for non uniform shift generation. a. Seed tile b. Base row tile where $2 \leq X \leq N$ and Y= X+N-1 c. Base Column tile where $2 \leq X \leq N$, Y= X+N-1 and $T = (T-S[X-1]) \mod N$ for $T_0$ = 1 d. Computational tile where $1 \leq X \leq N$, $T= (X-1-S) \mod N$ , $O/p=(X-S[L]]) \mod N$ }
\label{nonuniformtiles}
\end{figure}

\begin{table}[ht]
\caption{Number of tiles for non uniform shift generation}
\centering
\begin{tabular}{|l|l|}
\hline
Type of Tiles&Number of tiles \\ \hline
Seed tiles&1 \\ \hline
Base row tile& N-1 \\ \hline
Base column tile&N-1  \\ \hline
Computational tile&N $\times$ S\\ \hline
Total tiles& 2(N-1)+ N$\times$S\\ \hline
\end{tabular}
\label{nonuniformtiletable}
\end{table}

\begin{figure}[h]
\centering
\includegraphics[width=5.0in]{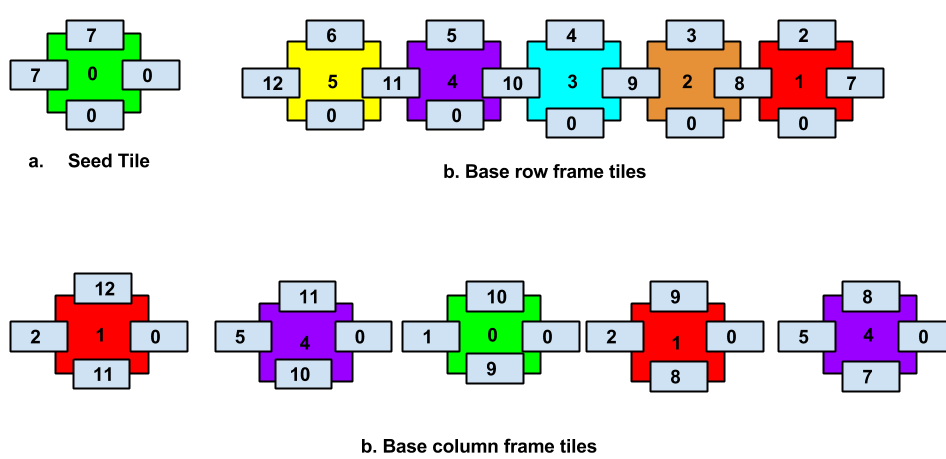}
\caption{Tile sets of non uniform tile generation with N=6 ans S= [2,3,1,2,3]. a. Seed tile b. Base row tile c. Base Column tile} \label{nonuniformeg}
\end{figure}

\begin{figure}[h]
\centering
\includegraphics[width=4.0in]{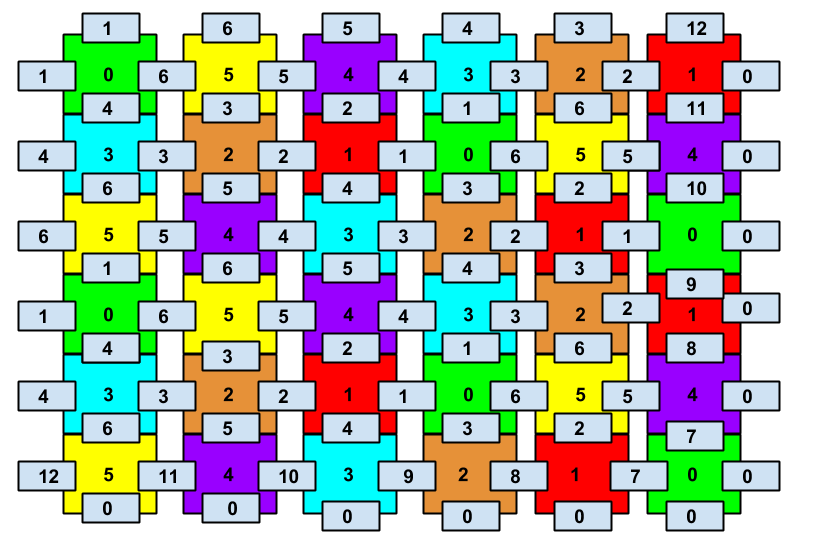}
\caption{Non uniform Shift generation with N=6 and S= [2,3,1,2,3] with the given tile system in Fig \ref{nonuniformeg}.}
\label{nonuniformgeneration}
\end{figure}

\begin{example}
Let us consider a tile set for N=6 and non uniform Shift value S= [2,3,1,2,3]. Tile set file generated by using DNA Image Pro as shown in Fig \ref{nonuniformeg}. Tile assembly generated for non uniform shift is shown in Fig \ref{nonuniformgeneration}.\\
\end{example}

%%%%%%%%%%%%%%%%%%%%%%%%%%%%%%%%%%%%%%%%%%%%%%%%%%%%%%%%%
\subsection{Row Transformation}
Row transformation on pixel configuration is considered in this section. Transformation of the base row tiles will effect the shift generation of the tile assembly. Let the image with P configuration be transformed to image with Q configuration by some transformation T. This will result in the tile assembly transformation. Lemma \ref{rowtranform} gives the method to apply the row transformation of the DNA tiles using shift operator.

\begin{lemma}
Row transformation of a DNA tile set of image block can be obtained by ${\sum_{j=1}^i S[j] \hspace{1mm}  mod  \hspace{1mm} N}$ where S is the shift operator and N is number of tiles in the base row.
\label{rowtranform}
\end{lemma}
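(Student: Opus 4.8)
The plan is to recognise that a single row-to-row transition is a cyclic shift of the $N$ base-row tiles, which I would model as the action of an element of the cyclic group $\Z_N = (\Z/N\Z,+)$ on tile positions: the shift by $s$ becomes the map $\sigma_s(x) = (x+s)\bmod N$, up to the sign/direction convention $(x-S)\bmod N$ realised by the computational tiles in Fig.~\ref{nonuniformtiles}. Under this identification the lemma reduces to a statement about how the successive shift maps compose as one descends from the base row through successive rows, and the modular cumulative sum is exactly the net shift of the composite.

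First I would establish the key homomorphism property: composing two shifts adds their shift values modulo $N$, i.e.\ $\sigma_s \circ \sigma_t = \sigma_{(s+t)\bmod N}$. This is immediate from associativity of addition in $\Z$ together with the fact that reduction modulo $N$ is a ring homomorphism $\Z \to \Z/N\Z$, so that $((x+t)+s)\bmod N = (x+(s+t))\bmod N$; note the same identity holds verbatim for the subtraction convention, so the argument is insensitive to the sign choice. In tile-assembly terms this says that the glue produced after applying shift $S[j-1]$ and then shift $S[j]$ is exactly the glue produced by a single shift of $S[j-1]+S[j]$, so two successive layers of computational tiles are interchangeable with one layer of the summed shift.

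Next I would argue by induction on the row index $i$. The base case $i=1$ is just the definition of the shift operator entry $S[1]$. For the inductive step, the transformation carrying the base row to row $i$ is the composite $\sigma_{S[i]} \circ \sigma_{S[i-1]} \circ \cdots \circ \sigma_{S[1]}$, and applying the homomorphism property repeatedly collapses this composite to the single shift $\sigma_{T_i}$ with $T_i = \big(\sum_{j=1}^{i} S[j]\big)\bmod N$, which is precisely the claimed formula. Because the group of cyclic shifts of a row of length $N$ has order $N$, the net transformation depends only on this sum modulo $N$, which justifies the reduction appearing in the statement.

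The main obstacle I anticipate is bookkeeping rather than depth: one must verify that the abstract shift map $\sigma_{S[j]}$ agrees with the concrete per-row shift actually implemented by the computational tiles of the non-uniform construction (the tile system $T_2$), including the direction convention and the normalisation ``add $N$ if the result is $<1$'' used in the worked examples, which is simply the choice of representative in $\{1,\dots,N\}$ for a class in $\Z_N$. Once that correspondence is checked, so that the $j$-th layer of computational tiles indeed realises $\sigma_{S[j]}$, the additive composition and the induction go through without further difficulty and the cumulative-sum formula follows.
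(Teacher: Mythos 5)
The paper states Lemma~\ref{rowtranform} without any proof at all, so there is nothing in the source to compare your argument against; your proposal in fact supplies the justification the paper omits. Your argument is correct and is the natural one: identifying each per-row shift with the map $\sigma_s(x)=(x+s)\bmod N$ on $\Z/N\Z$, observing that $\sigma_s\circ\sigma_t=\sigma_{(s+t)\bmod N}$, and inducting on the row index $i$ collapses the composite of the first $i$ shifts to the single shift $\bigl(\sum_{j=1}^{i}S[j]\bigr)\bmod N$, which is exactly the claimed formula. The one point you rightly flag as needing care --- that the $j$-th layer of computational tiles in the non-uniform construction really does realise $\sigma_{S[j]}$ under the paper's sign convention $(X-S)\bmod N$ and its normalisation to representatives in $\{1,\dots,N\}$ --- is the only substantive verification, and it follows directly from the tile definitions in Fig.~\ref{nonuniformtiles} and the worked example. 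So your write-up is not merely consistent with the paper; it is more complete than the paper itself on this point.
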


%As we know a tile file in our examples can be described as \\
%1) Base Row  color array\\
%2) Shift Array\\\\
%Now to get the row transformation we need to modify both the array accordingly. \\
%Let’s say our original color array is C[] and shift array is S[]. \\
%Accumulator[i] = $\sum\limits_{j=1}^i S[j]$\\
%New Shift[i] = (Accumulator[i])\%N\\
%To get the shift for base row we will rotate color array by S[0].\\
%New Shift and color array will generate the row transformation of input file.\\

%%%%%%%%%%%%%%%%%%%%%%%%%%%%%%%%%%%%%%%%%%%%%%%%%%%%%%%%%%
\subsection{Image Tile generator}
%If the image dimensions are N$\times$M then N*M tile types are needed to generate the image by tiles assembly. 
For any input image block, one can generate the tile assembly. By rendering the image selected, the tile sets are generated. There are two methods to generate the image.
\begin{enumerate}
\item Rapid image generator
\item Normal image generator
\end{enumerate}

Unlike normal image generation, rapid image generator will compress the original image in 32$\times$32 and generate the tile file for compressed image. Normal image generator will not compress the input image so it will use more tile types then rapid image generator. Image will be generated at high speed by using rapid generator but it will decrease the quality of image. Figure \ref{smiley} is the original image which is to be generated by using rapid image generation. Figure \ref{rapidimage} shows the image generated by tile assembly using rapid image generation.

\begin{figure}[h]
\centering
\includegraphics[width=1.5in]{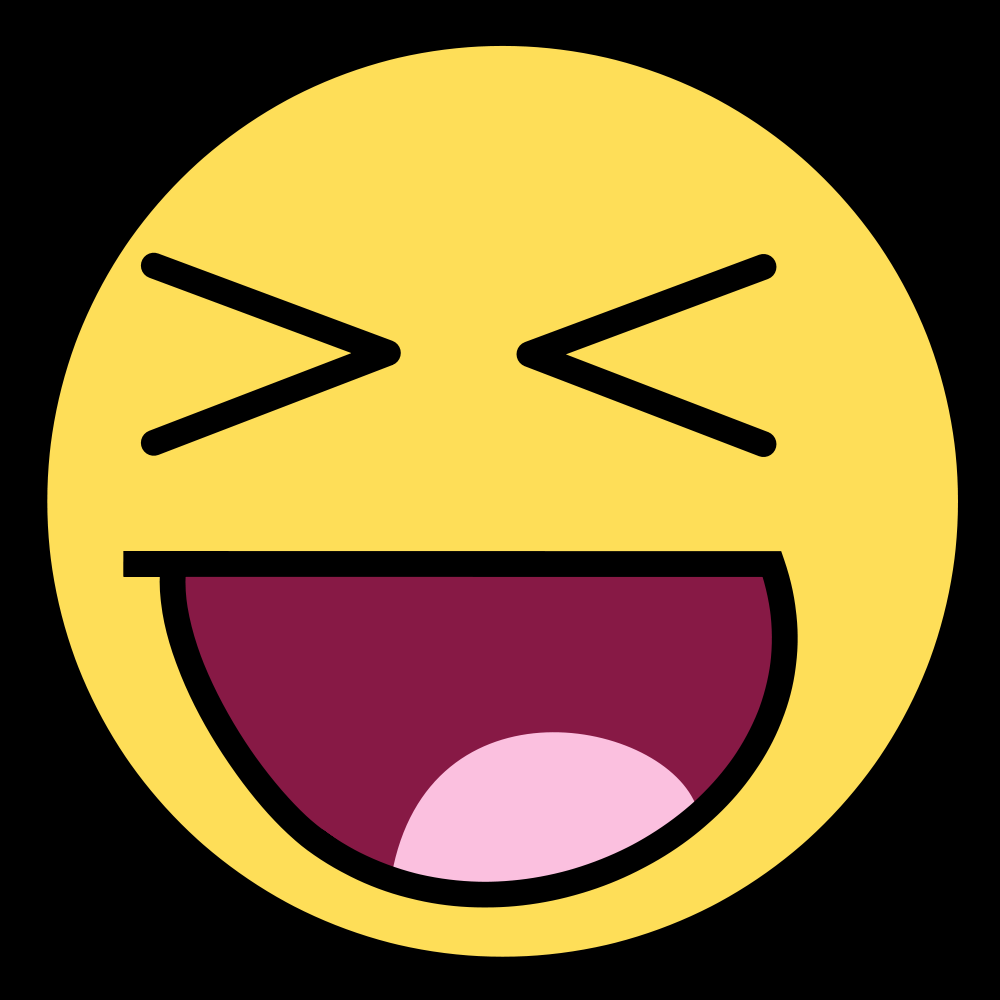}
\caption{Original image to be generated using tiles assembly}
\label{smiley}
\end{figure}

\begin{figure}[ht]
\centering
\includegraphics[width=1.5in]{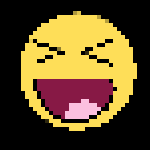}
\caption{Image generated by rapid image generator in Xgrow}
\label{rapidimage}
\end{figure}

To estimate the tile files required for particular image, Fig \ref{image_tilesize} shows the relation between image size and size of tiles required. As the image size increases, number of tiles required increases. 

\begin{figure}
\centering
\includegraphics[scale=0.9]{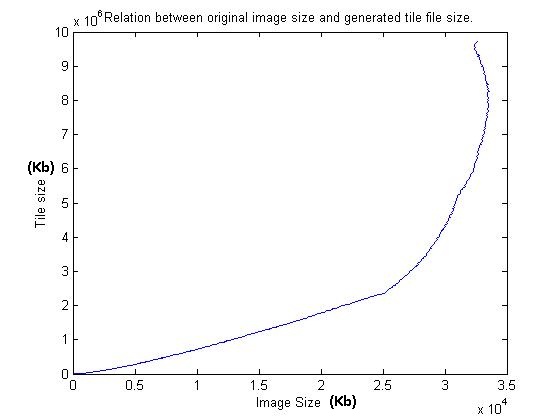}
\caption{Plot shows the relation between tile file size required to generate image.}
\label{image_tilesize}
\end{figure}

To estimate the time required to generate particular image by tile assembly, Fig \ref{sizetime} shows the relation between image size and time required to generate the image. As the image dimension increases, time required for tile assembly increases. 

\begin{figure}[h]
\centering
\includegraphics[scale=0.5]{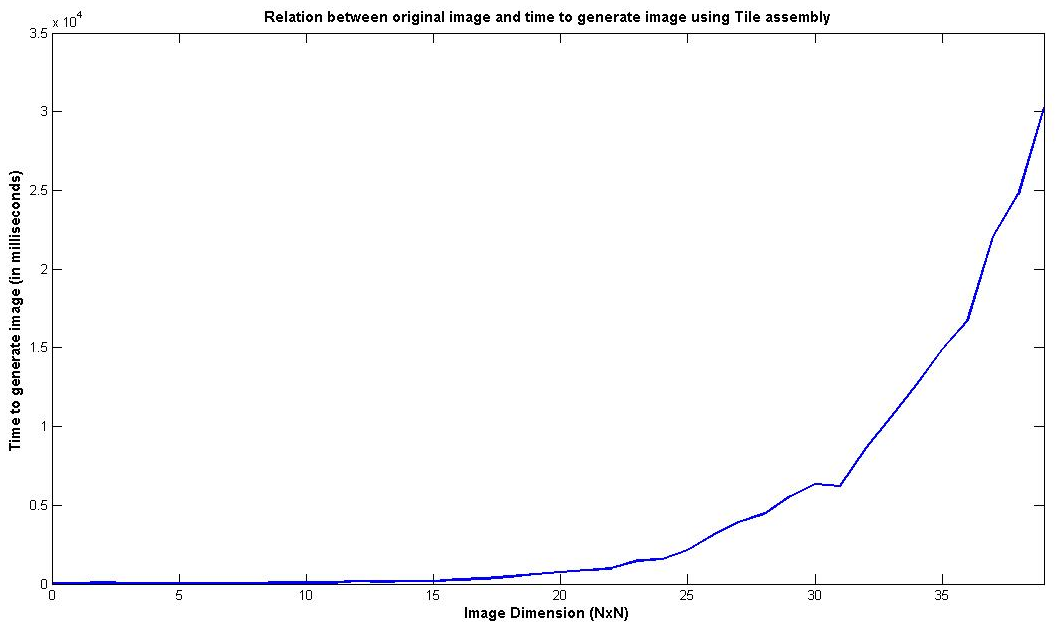}
\caption{Plot shows the relation between time required to generate the image.}
\label{sizetime}
\end{figure}

%%%%%%%%%%%%%%%%%%%%%%%%%%%%%%%%%%%%%%%%%%%%%%%%%%%%%%%%%%%%%%%%%%%%%%%%%%%%%%
\section{DNA Image Pro: GUI OVERVIEW}
DNA Image Pro is used to generate tile file for different types of images and it also provides new version of Xgrow.
It has menu with the following option:
\begin{itemize}
\item Uniform Shift Generator
\item Non Uniform Shift Generator
\item Transformation Generator
\item Image Tile generator
\item Run a Tile File
\end{itemize}

%%%%%%%%%%%%%%%%%%%%%%%%%%%%%%%%%%%%%%%%%%%%%%%%%%%%%%%%%%%%%%%%%%%%%%%%%%%%%%
\subsection{Uniform Shift Generator}
This will generate the N$\times$N tile sets on the basis of the seed tile sets specified. User should specify the flakes tile set that is the base row for the tile assembly. It will generate the tile assembly uniformly with given value of the shift operator. It will have uniform pattern for the tile assembly. Figure \ref{baserowuniform} shows the image tile generated by DNA Image Pro with number of tiles N = 6 and uniform shift operator S = 2. Figure \ref{uniformshift2} is the result of tile set simulation in Xgrow for the example N = 6 and uniform shift operator S = 2.

\begin{figure}[h]
\centering
\includegraphics[width=3.0in]{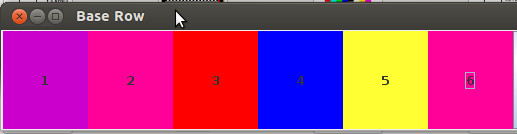}
\caption{Base row for uniform shift generator}
\label{baserowuniform}
\end{figure}

\begin{figure}[h]
\centering
\includegraphics[width=2.0in]{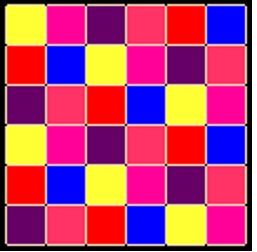}
\caption{Xgrow simulation of tile set with N=6 and S = 2 by uniform shift generator}
\label{uniformshift2}
\end{figure}

%%%%%%%%%%%%%%%%%%%%%%%%%%%%%%%%%%%%%%%%%%%%%%%%%%%%%%%%%%%%%%%%%%%%%
\subsection{Non Uniform Shift Generator}
It is similar to uniform generator. This will generate the N$\times$N tile sets on the basis of the seed tile sets specified. User should specify the flakes tile set that is the base row for the tile assembly.  It will generate the tile assembly non uniformly with given value of the shift operators. Unlike uniform generator, there will be shift operator for each row. So the number of shift operator will be N-1. Figure \ref{basenonuniform} shows the image tile generated by DNA Image Pro with number of tiles N = 6 and non uniform shift operator S = [1,2,3,1,2]. Figure \ref{nonuniformshift} is the result of tile set simulation in Xgrow for the example N = 6 and uniform shift operator S = [1,2,3,1,2].

\begin{figure}[h]
\centering
\includegraphics[width=3.0in]{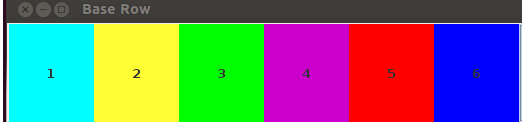}
\caption{Base row for non uniform shift generator}
\label{basenonuniform}
\end{figure}

\begin{figure}[h]
\centering
\includegraphics[width=2.0in]{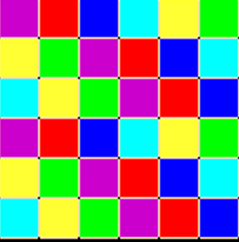}
\caption{Xgrow simulation of tile set with N = 6 and shift S= [1,2,3,1,2] by non uniform shift generator}
\label{nonuniformshift}
\end{figure}

\subsection{Transformation Generator}
 It is similar to non-uniform generator along with the row transformation on the base tiles. This will generate the N$\times$N tile sets on the basis of the seed tile sets specified. It will generate the tile assembly by applying the row transformation on the base tile set using the shuffle value provided.
 
%%%%%%%%%%%%%%%%%%%%%%%%%%%%%%%%%%%%%%%%%%%%%%%%%%%%%%%%%%%%%%%%%%%%%%%%
\subsection{Run a Tile File}
This option is used to execute a tile file generated by DNA Image Pro in Xgrow. User need to select a file and it will execute it in Xgrow.

%\subsection{Image Generation }
%Every pixel of an image can be considered as a single tile. So we can generate an Image of N$\times$M dimensions using N$\times$M tiles and appropriate glues.

%\section{Lower bounds and results}
%\begin{lemma}Minimum number of tile types needed for uniform shift generation are 3N - 1. \end{lemma}
%\begin{proof}
%Consider number of tiles N for uniform shift generation. Total number of tiles required be sum total of tiles for base row , base column and rule tiles. For uniform shift generation, base row tiles are N, base column is N-1 and rule tiles are N. Hence total tile required are N+N-1+N. Therefore minimum number of tiles types needed are 3N-1.
%\end{proof}

%\begin{lemma} Minimum number of tile types needed for non uniform transformation are N$\times$ Number of unique Shift + 2N -1.\\
%This is a loose upper bound.  \end{lemma} %For some cases it will be less than this.
%\begin{proof}

%%%%%%%%%%%%%%%%%%%%%%%%%%%%%%%%%%%%%%%%%%%%%%%%%%%%%%%%%%%%%%%%%%%%%%%%%%%
\section{OPEN PROBLEM}
We are considering each pixel of an image as unique tile type. So if we want to generate an image with width W and length L, we would require W$\times$L tile types. For any given image, optimization of the tile set assembly is still an open challenge. One can use the redundancy in the image to search for the minimal set of tiles needed to generate the image.

%%%%%%%%%%%%%%%%%%%%%%%%%%%%%%%%%%%%%%%%%%%%%%%%%%%%%%%%%%%%%%%%%%%%%%%%%
\section{Software Availability}
The software (source code), installers, user manual, installation guide and other related materials can be downloaded from \\ 
http://www.guptalab.org/dnaimagepro/.

%%%%%%%%%%%%%%%%%%%%%%%%%%%%%%%%%%%%%%%%%%%%%%%%%%%%%%%%%%%%%%%%%%%%%%
\section*{Acknowledgments}\label{sec:Acknowledgments}
Authors would like to thank Paul W. K. Rothemund for useful comments on the paper.

\bibliographystyle{splncs03}
\bibliography{dnaimage}

\begin{thebibliography}{10}
\providecommand{\url}[1]{\texttt{#1}}
\providecommand{\urlprefix}{URL }

\bibitem{doty2012theory}
Doty, D.: Theory of algorithmic self-assembly. Communications of the ACM
  55(12),  78--88 (2012)

\bibitem{fujibayashi2007toward}
Fujibayashi, K., Hariadi, R., Park, S.H., Winfree, E., Murata, S.: Toward
  reliable algorithmic self-assembly of {DNA} tiles: A fixed-width cellular
  automaton pattern. Nano Letters  8(7),  1791--1797 (2007)

\bibitem{he2005self}
He, Y., Chen, Y., Liu, H., Ribbe, A.E., Mao, C.: Self-assembly of hexagonal
  {DNA} two-dimensional (2d) arrays. Journal of the American Chemical Society
  127(35),  12202--12203 (2005)

\bibitem{he2010chirality}
He, Y., Su, M., Fang, P.a., Zhang, C., Ribbe, A.E., Jiang, W., Mao, C.: On the
  chirality of self-assembled {DNA} octahedra. Angewandte Chemie  122(4),
  760--763 (2010)

\bibitem{he2008hierarchical}
He, Y., Ye, T., Su, M., Zhang, C., Ribbe, A.E., Jiang, W., Mao, C.:
  Hierarchical self-assembly of {DNA} into symmetric supramolecular polyhedra.
  Nature  452(7184),  198--201 (2008)

\bibitem{labean2000construction}
LaBean, T.H., Yan, H., Kopatsch, J., Liu, F., Winfree, E., Reif, J.H., Seeman,
  N.C.: Construction, analysis, ligation, and self-assembly of {DNA} triple
  crossover complexes. Journal of the American Chemical Society  122(9),
  1848--1860 (2000)

\bibitem{lin2006DNA}
Lin, C., Liu, Y., Rinker, S., Yan, H.: {DNA} tile based self-assembly: Building
  complex nanoarchitectures. ChemPhysChem  7(8),  1641--1647 (2006)

\bibitem{liu2004DNA}
Liu, D., Park, S.H., Reif, J.H., LaBean, T.H.: {DNA} nanotubes self-assembled
  from triple-crossover tiles as templates for conductive nanowires.
  Proceedings of the National Academy of Sciences of the United States of
  America  101(3),  717--722 (2004)

\bibitem{park2005three}
Park, S.H., Barish, R., Li, H., Reif, J.H., Finkelstein, G., Yan, H., LaBean,
  T.H.: Three-helix bundle {DNA} tiles self-assemble into 2d lattice or 1d
  templates for silver nanowires. Nano letters  5(4),  693--696 (2005)

\bibitem{Seeman1982237}
Seeman, N.C.: Nucleic acid junctions and lattices. Journal of Theoretical
  Biology  99(2),  237 -- 247 (1982),
  \url{http://www.sciencedirect.com/science/article/pii/0022519382900029}

\bibitem{Xgrow}
Winfree, E.: The xgrow simulator (2003), \url{http://www.dna.caltech.edu/Xgrow}

\bibitem{winfree1998algorithmic}
Winfree, E.: Algorithmic self-assembly of {DNA}. Ph.D. thesis, California
  Institute of Technology (1998)

\bibitem{zhang2006periodic}
Zhang, J., Liu, Y., Ke, Y., Yan, H.: Periodic square-like gold nanoparticle
  arrays templated by self-assembled {2D} {DNA} nanogrids on a surface. Nano
  letters  6(2),  248--251 (2006)

\end{thebibliography}

\end{document}